\newcommand{\linegap}{1}
\theoremstyle{plain}
\theoremstyle{break} \theorembodyfont{\it}
\newtheorem{prop}{Proposition}
\begin{document}

\title {Exact Analysis of Rate Adaptation Algorithms in Wireless LANs}

\author{Angad Singh and David Starobinski \\
Department of Electrical and Computer Engineering, Boston University \\
Email: {\it \{angad,staro\}@bu.edu}}

\maketitle

\thispagestyle{empty}

\begin{abstract}
Rate adaptation plays a key role in determining the performance of
wireless LANs. In this paper, we introduce a semi-Markovian
framework to analyze the performance of two of the most popular
rate adaptation algorithms used in wireless LANs, namely Automatic
Rate Fallback (ARF) and Adaptive Automatic Rate Fallback (AARF).
Given our modeling assumptions, the analysis is exact and provides
closed form expressions for the achievable throughput of ARF and
AARF. We illustrate the benefit of our analysis by numerically
comparing the throughput performance of ARF and AARF in two
different channel regimes. The results show that neither of these
algorithms consistently outperforms the other. We thus propose and
analyze a new variant to AARF, called Persistent AARF (or PAARF),
and show that it achieves a good compromise between the two
algorithms, often performing close to the best algorithm in each
of the studied regimes. The numerical results also shed light into the impact of MAC overhead on the performance of the three algorithms. In particular, they show that the more conservative strategies AARF and PAARF scale better as the bit rate increases.

\end{abstract}

\newpage

\setcounter{page}{1} \pagenumbering{arabic}

\section {Introduction}

Wireless LANs play a prominent role among wireless communication
systems~\cite{haastelecomm,ieee:80211,masked}. Most
wireless LANs support data transmission at multiple bit rates by
employing different modulation and channel encoding schemes. The
IEEE 802.11 WLAN family of standards is amongst the most popular
WLAN systems supporting data transmission at multiple
bit rates~\cite{ieee:80211,staro}. For instance, the IEEE 802.11b
standard allows transmissions at four different bit rates, i.e.,
1, 2, 5.5, and 11 Mbs, while the newer IEEE 802.11g standard
allows transmissions at 12 different bit rates ranging from 1 Mbs
to 54 Mbs.

The volatile nature of the wireless channel resulting from fading,
attenuation, and  interference from other radiation sources, makes
the task of rate selection in multi-rate WLANs system a key
feature for throughput optimization. A well designed algorithm
ought to select a bit rate for data transmission that maximizes
the instantaneous throughput. A key challenge however is that
channel quality usually fluctuates and, thus, any rate selection
algorithm must adapt to variations in the channel and network
conditions.

For IEEE 802.11 WLAN systems, several rate adaptation algorithms
have been proposed, see,
e.g.,~\cite{OrgARF,AARF,AARF_3,RRAA,CARA,Bicket}. Most of these
algorithms are rooted in the same design philosophy. They employ
open-loop rate adaptation schemes, run locally on the network
nodes, that dynamically determine the data transmission rate based
on certain statistics collected by the transmitting node. Two of
the most popular rate adaptation algorithms belonging to this
category are the Automatic Rate Fallback (ARF)~\cite{OrgARF} and
Adaptive Automatic Rate Fallback (AARF)~\cite{AARF} algorithms
that use consecutive successful or failed packet transmissions to
guide rate adaptation (cf. Section~\ref{sec:Related Work}).

In this paper, we propose a new analytical framework to evaluate
the performance of the ARF and AARF algorithms in wireless LANs
with random channels. Our analysis, based on the theory of
semi-Markov processes~\cite{ROSS}, is \emph{exact} and provides
\emph{closed-form} expressions for the throughput achieved by ARF
and AARF.   While our analysis necessarily relies on some
simplifying assumptions for the sake of tractability, it has the
clear advantage of providing meaningful insight into the impact of
various algorithm and channel parameters on the performance of
these algorithms.

To illustrate the benefits of our analysis, we present numerical
results comparing the performance ARF and AARF for different
channel regimes. Our numerical results clearly identify channel
regimes where AARF outperforms ARF, and are in line with
simulation and experimental results reported
in~\cite{Bicket,AARF}. More surprisingly, they also show that
there exist some practical regimes where ARF significantly
outperforms AARF.

Based on the insights gathered from our numerical analysis, we
propose a new variant to AARF, called Persistent Adaptive
Automatic Rate Fallback (PAARF). We show that the analysis of AARF
can easily be extended to that of PAARF. Numerical results show
that PAARF reaches a good compromise between ARF and AARF and
often gets very close to the best performing algorithm  in each of
the studied regimes.

It should be emphasized that the main goal of this paper is to provide general, qualitative insight
into the performance of rate adaption algorithms in wireless LANs, rather than conducting
detailed numerical modeling of a specific  protocol.

The rest of this paper is organized as follows. We discuss related
work  in Section~\ref{sec:Related Work} and introduce our model
and notations in Section~\ref{sec:Model and Notations}. We conduct
the analysis of ARF and AARF in Section~\ref{sec:analysis} and show how it can be generalized to handle MAC overhead. In Section~\ref{sec:Numercial Analysis}, we numerically compare the
performance of ARF and AARF and introduce the new PAARF algorithm.
We provide concluding remarks in Section~\ref{sec:Conclusions}.

\section {Related Work}
\label{sec:Related Work}

We first provide detail on the ARF and AARF algorithm and then
briefly discuss other relevant work. ARF~\cite{OrgARF} is the first documented rate adaptation
algorithm developed to optimize throughput performance in wireless
LAN devices.
ARF keeps transmitting at a given bit rate until a certain number of
\emph{consecutive} packets transmissions have either succeeded or
failed. Specifically, if $f$ consecutive packet transmissions fail
to get acknowledged at the current bit rate, then the next lower
bit rate (if there is such one) is selected for data transmission.
Similarly, if $s$ consecutive packet transmissions are acknowledged
without any re-transmissions at the current bit rate, then the next
higher bit rate (if there is such one) is selected for data
transmission. The default value of ARF parameters are $f=2$ and
$s=10$. ARF requires the maintenance of very little state
information.
Its simplicity has made it one of the most widely implemented
open-loop rate adaptation schemes in commercial 802.11 WLAN
devices~\cite{CARA}.

AARF~\cite{AARF} is derived from ARF. It tries to improve
throughput performance in scenarios where the packet success
probability at a certain bit rate is much higher than at the next
higher bit rate. The problem with ARF in such cases is that after
$s$ consecutive successful packet transmission at the low bit rate
it always attempts transmissions at the higher bit rate.
Instead, AARF implements a binary exponential back-off procedure
whereby after every failed probe packet transmission at the higher
bit rate, AARF doubles (up to some maximum value) the threshold
number of consecutive packet transmissions required at the current
bit rate before attempting a packet transmission at the next
higher bit rate. Thus, AARF initially looks for $s$ consecutive
successful packet transmissions at the current bit rate after
which it sends a probe packet at the next higher bit rate. If the
probe packet transmission is successful, then AARF switches
 to the higher bit rate. Otherwise, it stays in the
current bit rate. In that case, the next probe packet transmission
at the higher bit rate is attempted only after $2s$ consecutive
successful packet transmissions at the current bit rate and so
forth.

Several papers have proposed various modifications and
improvements over the basic ARF and AARF rate adaptation
algorithms, see e.g.,~\cite{Bicket,RRAA,CARA,CARA11,SNR1}. Several
of these algorithms, e.g.,~\cite{CARA,CARA11,SNR1},  are based on
variations of ARF, and thus we expect our analytical models to be
useful to evaluate their performance as well.

So far, most of the evaluation of rate adaptation algorithms has
been carried out through
simulations~\cite{AARF,CARA,AARF_3,AARF_7,AARF_12,SNR1,EACK} or
experiments on actual testbed networks~~\cite{Bicket,RRAA}.
 Although there exists analytical work for multi-rate wireless networks, see
e.g.,~\cite{WLANsingle1,WLANsingle2}, that work assumes that each
node always transmits at a fixed rate.
An exception is the recent work of~\cite{Infocomm}, where, among
other contributions, the authors present a Markov chain model of
ARF. The present work differs from~\cite{Infocomm} in several
aspects. First, we use the more general theory of semi-Markov
processes to analyze rate adaptation algorithms. Thus, the
analysis provides means to evaluate the effects of MAC overhead (e.g., binary exponential back-off),
which can be significant at high bit rate rates.
Second, we also provide an analysis of the AARF algorithm and
numerical comparisons between the performance of ARF and AARF.
Finally, we introduce the new PAARF algorithm and compare its
performance to the two other algorithms. A preliminary version of this paper appeared in~\cite{SECON07}. The present work primarily adds to that earlier work by modeling and analyzing the impact of MAC overhead (cf.~Sections~\ref{sec:ARF_IEEE} and~\ref{sec:impact}).

\section {Model and Notations}
\label{sec:Model and Notations}

Our goal in this paper is to conduct an exact analysis of the
performance of the ARF and AARF algorithms in wireless LANs, such
as IEEE 802.11 networks. As such, a certain number of assumptions
are necessary in order to keep the analysis tractable.

In order to decouple the behavior of the above algorithms from
other MAC and higher-layer mechanisms, we focus our attention on
the behavior of ARF and AARF for a single source-destination pair
(e.g., a mobile node and a base station). We note that most
wireless LANs operate at low load and, thus, it is typical that,
at any given point of time,  only one pair of nodes
communicates~\cite{Traffic}. We assume that the source is greedy,
i.e., it has always packets to transmit.

The source can transmit packets at $N$ different bit rates,
denoted by $R_1, R_2, \ldots, R_N$ in units of bit/s. Without any
limitations of generality, we  assume that these rates are sorted
from the lowest to the highest, i.e.,  $R_{1}$ represents the
lowest available bit rate and $R_{N}$ the highest. At each
bit rate $R_{i}$, we denote the probability of a successful packet
transmission by $\alpha_i$, where $0<\alpha_{i}<1$. This
probability is assumed to be independent of any other events.

We use the random variable $\ell$ to represent the length (in
bits) of a packet. This variable follows an arbitrary i.i.d.
distribution (i.e., not necessarily exponential). The mean packet
length is denoted by $\overline{\ell}$.

Next, define $f_{i}$ to be  the long run proportion of time during
which packet transmission is carried out at the bit rate $R_{i}$.
We can then express the steady-state throughput $\tau$ as follows:
\begin{equation}\label{eq:throughput}
\tau=\sum_{i=1}^{N}{f_{i}\alpha_{i}R_{i}}.
\end{equation}
The key for characterizing the throughput performance of ARF and
AARF resides in deriving an expression for $f_{i}$ for each of the
algorithms.


\section{Analysis}
\label{sec:analysis}

In Sections~\ref{sec:ARF} and~\ref{sec:AARF},  we analyze the throughput performance of ARF and AARF. The analysis will lead to closed-form expressions for the
throughput delivered by these algorithms. The focus of the
analysis presented in these sections is to understand the basic
behavior of these algorithms independent of the specific underlying
MAC protocol. Hence, initially, we will not deal with protocol-specific
details, such as back-off retransmissions, control packets, and inter-frame spacings in the analysis. Those will be considered in Section~\ref{sec:ARF_IEEE}.

\subsection{ARF}
\label{sec:ARF}

Based on our statistical assumptions, we next show that the behavior
of ARF can be analyzed using the theory of semi-Markov
processes~\cite{ROSS}. Similar to a Markov process, a semi-Markov
process transitions between different states. Upon \emph{entering} a
certain state, the time spent in that state and the transition
probabilities to the various possible next states depend only on the
present state and are independent of the history. However, contrary
to a standard Markov process, the time spent in each state follows a
general distribution, which is not necessarily memoryless. Thus, a
semi-Markov process is not Markovian at an arbitrary point of time.
However, one can create an embedded Markov chain by sampling the
original process at moments of transition to a new state.

Now, define state $i$ to be the state in which packets are
transmitted at the bit rate $R_i$. Clearly, upon entering state $2
\leq i \leq N-1$, the time spent in that state and the transition
probabilities to state $i-1$ and $i+1$ depend only on the
parameters $\alpha_i$, $R_i$, $\overline{\ell}$, $s$,  and $f$
and, thus, are independent of the past.
Similar arguments apply for the time spent in states 1 and $N$.
Thus, the  behavior of ARF can be modeled using a semi-Markov
process. The embedded Markov chain for the problem at hand is
depicted in Fig.~\ref{fig:ARF-Macro}. The quantities $p_{i,j}$
shown in the figure represent the transition probabilities from
state $i$ to state $j$.

\begin{figure}[t]
\centering
\resizebox{3in}{!}{\includegraphics{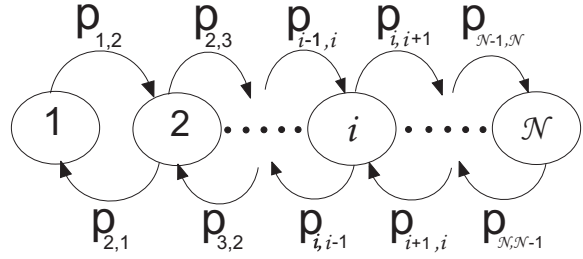}}
\caption{Embedded Markov chain modeling  ARF behavior at the
moments of transitions to a new state. State $i$ represents packet
transmissions at rate $R_i$.} \label{contblock}
\label{fig:ARF-Macro}
\end{figure}\renewcommand{\baselinestretch}{\linegap}

As per Eq.~(\ref{eq:throughput}), in order to find an expression
for the throughput of ARF we need to calculate $f_{i}$, i.e., the
long run proportion of time data transmission is carried out at
the bit rate $R_{i}$. Let $p_{i}$ represent the steady-state
probability of finding the semi-Markov process in state $i$. From
the definition of state $i$, we immediately see that
$f_{i}=p_{i}$.

In order to compute $p_i$, we will exploit the mathematical
properties of semi-Markov processes~\cite{ROSS}. Specifically,
define $\mu_i$ to be the mean time spent in each state $i$ of the
semi-Markov process and $\pi_i$ to be the steady-state proportion
of transitions into state $i$. The latter also corresponds to the
steady-state fraction of time the embedded Markov chain associated
with the process finds itself in state $i$. Then, it can be
shown~\cite{ROSS}:
\begin{equation}\label{eq:frac}
p_{i}=\frac{\pi_{i}\mu_{i}}{\sum_{i=1}^{N}{\pi_{i}\mu_{i}}}.
\end{equation}

The embedded Markov chain shown in Fig.~\ref{fig:ARF-Macro} is a
simple birth-death process~\cite{ROSS}. Thus, the steady-state
probabilities $\pi_i$, for each state $i \geq 2$, can readily be
expressed as follows:
\begin{equation}\label{eq:pi}
\pi_{i}= \pi_{1}\prod_{k=1}^{i-1}{\frac{p_{k,k+1}}{p_{k+1,k}}}.
\end{equation}
In order to calculate $\pi_{1}$, we apply the normalization
condition and get
\begin{equation}\label{eq:pi_1}
\pi_{1}=\frac{1}{1+\sum_{i=2}^{N}{\prod_{k=1}^{i-1}\frac{p_{k,k+1}}{p_{k+1,k}}}}.
\end{equation}

In order to complete the analysis, it just remains to derive
expression for the average time spent in each state $\mu_{i}$ and
the transition probabilities $p_{i,j}$.
Toward this end, we need to model the operations of ARF within
each state $i$, corresponding to transmissions at bit rate
$R_{i}$.
Specifically, we need to keep track of the number of consecutive
successful or failed packet transmissions. The state diagram shown
in Fig.~\ref{fig:ARF-micro} models the behavior of ARF at a given
bit rate $R_{i}$, where $1<i<N$. The initial state is state
$\cal{S}_0^i$. Each subsequent successful packet transmission
leads to a transition into some state $\cal{S}_j^i$, where $j$
represents the number of consecutive successful packet
transmissions. Similarly each failed packet transmission leads to
a transition into some state $\cal{S}_{-j}^i$ where $j$ represents
the number of consecutive failed packet transmissions. States
$\cal{S}_s^i$ and $\cal{S}_{-f}^i$ are termination states after
which packet transmissions will occur at bit rates $R_{i+1}$ and
$R_{i-1}$, respectively. The state diagrams for bit rates $R_1$
and $R_N$ are similar, except that there is no need to account for
consecutive failed packet transmissions  and consecutive
successful packet transmissions, respectively.

\begin{figure}[t]
\centering
\resizebox{4in}{!}{\includegraphics{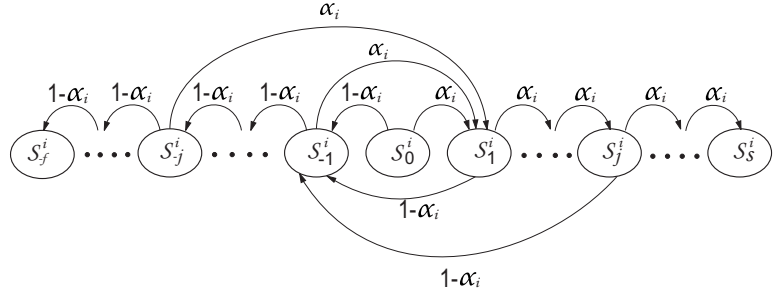}}
\caption{ARF operation at an intermediate bit rate $R_{i}$. States
$\cal{S}_j^i$ and $\cal{S}_{-j}^i$ represent respectively $j$
consecutive successful and failed packet transmissions.}
\label{fig:ARF-micro}
\end{figure}\renewcommand{\baselinestretch}{\linegap}

Now, let the random variable $X_{i}(j)$ represent the number of
packet transmissions at bit rate $R_i$  before reaching state
$\cal{S}_s^i$ or state $\cal{S}_{-f}^i$, starting from state
$\cal{S}_j^i$. The quantity $\overline{X_{i}}(0)$ represents the
average number of packet transmission starting from state
$\cal{S}_0^i$ until one of the termination states is reached. One
can express $\mu_i$ as a function of $\overline{X_{i}}(0)$ in the
following way:
\begin{equation}\label{eq:mui}
\mu_{i}=\overline{X_{i}}(0)\frac{\overline{\ell}}{R_{i}}.
\end{equation}

The special structure of the state diagram shown in
Fig.~\ref{fig:ARF-micro} allows to provide a closed-form
expression for $\overline{X_{i}}(0)$, as given by the following
proposition.

\begin{prop}
\label{prop1}
 Let $\overline{X_{i}}(0)$ represents the expected
number of packet transmission in state $i$ of
Fig.~\ref{fig:ARF-Macro}. Then, the following holds:

\begin{eqnarray}\label{eq:prop1}
\overline{X_{i}}(0)=\left\{\begin{array}{ll}
\frac{\sum_{j=0}^{s-1}{(\alpha_{i})^j}}{(\alpha_{i})^s},
& \qquad \mbox{for $i=1$;}\\[0.15in]
\frac{\sum_{j=0}^{s-1}{(\alpha_{i})^j}{\sum_{j=0}^{f-1}{{(1-\alpha_{i})^j}}}}
{1-\sum_{j=1}^{s-1}{(\alpha_{i})^j}\sum_{j=1}^{f-1}{{(1-\alpha_{i})^j}}},
& \qquad \mbox{for $1<i<N$ with $s>1$ and $f>1$;}\\[0.15in]
\sum_{j=0}^{s-1}{(\alpha_{i})^j}{\sum_{j=0}^{f-1}{{(1-\alpha_{i})^j}}},
& \qquad \mbox{for $1<i<N$ with $s=1$ or $f=1$;}\\[0.15in]
\frac{\sum_{j=0}^{f-1}{{(1-\alpha_{i})^j}}}{{(1-\alpha_{i})^f}},
& \qquad \mbox{for $i=N$}.
\end{array}
\right.
\end{eqnarray}
\end{prop}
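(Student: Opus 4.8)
The plan is to carry out a first-step (one-step conditioning) analysis directly on the micro-chain of Fig.~\ref{fig:ARF-micro}. For a fixed intermediate rate $i$ I write $\alpha := \alpha_i$ and $\beta := 1-\alpha_i$, and let $u_j := \overline{X_i}(j)$ be the expected number of remaining transmissions starting from state $\mathcal{S}_j^i$, so that $u_s = u_{-f} = 0$ at the two termination states and the target quantity is $\overline{X_i}(0) = u_0$. The key modelling point, and the crux of the whole argument, is to read the transitions off the chain correctly: a success always advances the success counter and a failure always advances the failure counter, but a failure occurring during a success run does \emph{not} return to $\mathcal{S}_0^i$; it jumps to $\mathcal{S}_{-1}^i$, and symmetrically a success during a failure run jumps to $\mathcal{S}_1^i$. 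Hence $\mathcal{S}_0^i$ is entered only once, at the very start, and the two runs are coupled through the shared restart states $\mathcal{S}_1^i$ and $\mathcal{S}_{-1}^i$. This feedback is precisely what produces the nontrivial denominator; everything else is bookkeeping with geometric sums.

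Concretely, for $1<i<N$ I would write the one-step equations $u_0 = 1 + \alpha u_1 + \beta u_{-1}$, together with $u_j = 1 + \alpha u_{j+1} + \beta u_{-1}$ for $1 \le j \le s-1$ and $u_{-j} = 1 + \beta u_{-(j+1)} + \alpha u_1$ for $1 \le j \le f-1$. First I would solve the success branch backward from $u_s = 0$, treating $u_{-1}$ as a constant; a short telescoping induction gives $u_{s-m} = (1+\beta u_{-1})\sum_{k=0}^{m-1}\alpha^k$, and in particular $u_1 = (1+\beta u_{-1})\sum_{k=0}^{s-2}\alpha^k$ and $u_0 = (1+\beta u_{-1})\sum_{k=0}^{s-1}\alpha^k$. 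The identical computation on the failure branch (with $\alpha,\beta,s$ replaced by $\beta,\alpha,f$ and $u_1$ now playing the role of the constant) yields $u_{-1} = (1+\alpha u_1)\sum_{k=0}^{f-2}\beta^k$.

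These last two relations form a $2\times 2$ linear system in $u_1,u_{-1}$. The clean way to close it is to set $P := 1+\beta u_{-1}$, so that $u_1 = P\sum_{k=0}^{s-2}\alpha^k$; substituting into the failure relation and using the index shifts $\alpha\sum_{k=0}^{s-2}\alpha^k = \sum_{j=1}^{s-1}\alpha^j$ and $\beta\sum_{k=0}^{f-2}\beta^k = \sum_{j=1}^{f-1}\beta^j$ gives $P = \frac{1+\sum_{j=1}^{f-1}\beta^j}{1 - \sum_{j=1}^{s-1}\alpha^j\,\sum_{j=1}^{f-1}\beta^j}$. Since $1+\sum_{j=1}^{f-1}\beta^j = \sum_{j=0}^{f-1}\beta^j$, multiplying by $\sum_{k=0}^{s-1}\alpha^k$ to recover $u_0 = P\sum_{k=0}^{s-1}\alpha^k$ reproduces the stated closed form for $1<i<N$ exactly. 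The case $s=1$ (resp.\ $f=1$) then needs no separate work beyond the empty-sum convention: the factor $\sum_{j=1}^{s-1}\alpha^j$ (resp.\ $\sum_{j=1}^{f-1}\beta^j$) vanishes, the denominator collapses to $1$, and the expression reduces to the product $\sum_{j=0}^{s-1}\alpha^j\sum_{j=0}^{f-1}\beta^j$, which is the displayed third line.

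Finally, the boundary rates $i=1$ and $i=N$ are handled by the same first-step method with one branch deleted. For $i=1$ there is no downward termination, so a failure resets to $\mathcal{S}_0^1$ and the recurrence $u_j = 1+\alpha u_{j+1}+\beta u_0$ solves to $u_0 = \left(\sum_{k=0}^{s-1}\alpha^k\right)\big/\left(1-\beta\sum_{k=0}^{s-1}\alpha^k\right)$; here the single simplification $1-\beta\sum_{k=0}^{s-1}\alpha^k = \alpha^s$ turns the denominator into $(\alpha_i)^s$ and yields the first line, with the $i=N$ line following by the $\alpha\leftrightarrow\beta$, $s\leftrightarrow f$ symmetry. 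I expect no serious obstacle beyond keeping the coupling and the index ranges of the geometric sums straight: the only genuinely substantive step is recognizing that the failure-during-success and success-during-failure transitions feed back into $\mathcal{S}_{\mp 1}^i$ rather than $\mathcal{S}_0^i$, since it is this feedback that creates the coupled system and hence the denominator $1-\sum_{j=1}^{s-1}(\alpha_i)^j\sum_{j=1}^{f-1}(1-\alpha_i)^j$.
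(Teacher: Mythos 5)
Your proposal is correct and follows essentially the same route as the paper's proof: a first-step (one-step conditioning) analysis on the micro-chain, solving the success and failure branches by a telescoping/inductive geometric-sum argument with $\overline{X_i}(-1)$ (resp.\ $\overline{X_i}(1)$) held as a constant, and then closing the resulting small linear system in $\overline{X_i}(0)$, $\overline{X_i}(1)$, $\overline{X_i}(-1)$; your factored form $(1+(1-\alpha_i)\overline{X_i}(-1))\sum_{k=0}^{m-1}\alpha_i^k$ is algebraically identical to the paper's Eq.~(\ref{eq:proof1}). The only (minor) differences are cosmetic --- you solve backward from the absorbing state and introduce the auxiliary quantity $P$ to streamline the elimination, and you work out the boundary cases $i=1$, $i=N$ and $s=1$ or $f=1$ explicitly where the paper merely asserts they are similar.
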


\begin{proof}

We will prove the proposition for the case $1<i<N$ with $s>1$ and
$f>1$. The proof for the other cases is similar.

The proof follows a two step approach. The first step is to show
that the following two equations hold:
\begin{equation}
\label{eq:proof1}
\overline{X_{i}}(s-u)=
\sum_{k=1}^{u}{(\alpha_{i})^k}+(1-\alpha_{i})(1+\overline{X_{i}}(-1))\sum_{k=0}^{u-1}{(\alpha_{i})^k}, 
\qquad \mbox{for $0<u<s$};
\end{equation}
\begin{equation}
\label{eq:proof2}
\overline{X_{i}}(-f+v)=
\sum_{k=1}^{v}{(1-\alpha_{i})^k}+\alpha_{i}(1+\overline{X_{i}}(1))\sum_{k=0}^{v-1}{(1-\alpha_{i})^k}, \qquad \mbox{for $0<v<f$}.
\end{equation}

We will prove Eq.~(\ref{eq:proof1}) using mathematical induction.
The proof of Eq.~(\ref{eq:proof2}) is conducted in a similar
manner.

First, we prove the basis of the induction, i.e., we consider the
case $u=1$. Consider the average number of transmissions starting
from state $\cal{S}_{s-1}^i$. With probability $\alpha_i$, the
next packet transmission is successful and ARF exits the current
bit rate to the next higher bit rate. Otherwise, with probability
$1-\alpha_i$, the transmission fails and the process moves to
state $\cal{S}_{-1}^i$. We thus have
\begin{equation}\label{eq:induction1}
\overline{X_{i}}(s-1)={\alpha_{i}}\cdot1+(1-\alpha_{i})(1+\overline{X_{i}}(-1)).
\end{equation}
This equation is equivalent to Eq.~(\ref{eq:proof1}) for $u=1$ and
thus proves the basis of the induction.



Next, we prove the induction step. Assume Eq.~(\ref{eq:proof1})
holds true for $u=m$, where $1 \leq m<s-1$, that is,
\begin{equation}\label{eq:induction2}
\overline{X_{i}}(s-m)=\sum_{k=1}^{m}{(\alpha_{i})^k}+(1-\alpha_{i})(1+\overline{X_{i}}(-1))\sum_{k=0}^{m-1}{(\alpha_{i})^k}.
\end{equation}

Now assume that the process is in state $\cal{S}_{s-(m+1)}^i$.
With probability $\alpha_i$, the next transmission is successful
and the process moves to state $\cal{S}_{s-m}^i$. Otherwise, with
probability $1-\alpha_i$, the transmission fails and the process
moves to state $\cal{S}_{-1}^i$. Thus,
\begin{equation}
\overline{X_{i}}(s-(m+1))=
\alpha_{i}(1+\overline{X_{i}}(s-m))+(1-\alpha_{i})(1+\overline{X_{i}}(-1)).
\label{eq:induction4}
\end{equation}

Substituting Eq.~(\ref{eq:induction2}) into
Eq.~(\ref{eq:induction4}), we obtain
\begin{equation}
\overline{X_{i}}(s-(m+1))=
\sum_{k=1}^{m+1}{(\alpha_{i})^k}+(1-\alpha_{i})(1+\overline{X_{i}}(-1))\sum_{k=0}^{m}{(\alpha_{i})^k},
\label{eq:induction3}
\end{equation}
hence proving the induction step.

Now we proceed with the second step of the proof. We note that
after the process enters state $\cal{S}_0^i$, it either moves to
state $\cal{S}_1^i$ (with probability $\alpha_i$) or to state
$\cal{S}_{-1}^i$ (with probability $1-\alpha_i$). Therefore,
\begin{equation}\label{eq:proof5}
\overline{X_{i}}(0)=\alpha_{i}(\overline{X_{i}}(1)+1)+(1-\alpha_{i})(\overline{X_{i}}(-1)+1).
\end{equation}

Substituting $u=s-1$ in Eq.~(\ref{eq:proof1}) and $v=f-1$ in
Eq.~(\ref{eq:proof2}), we get
\begin{equation}
\overline{X_{i}}(1)=
\sum_{k=1}^{s-1}{(\alpha_{i})^k}+(1-\alpha_{i})(1+\overline{X_{i}}(-1))\sum_{k=0}^{s-2}{(\alpha_{i})^k};
\label{eq:proof3}
\end{equation}
\begin{equation}
\overline{X_{i}}(-1)=
\sum_{k=1}^{f-1}{(1-\alpha_{i})^k}+\alpha_{i}(1+\overline{X_{i}}(1))\sum_{k=0}^{f-2}{(1-\alpha_{i})^k}.
 \label{eq:proof4}
\end{equation}
Equations~(\ref{eq:proof5}),~(\ref{eq:proof3}) and~(\ref{eq:proof4})
provide three linear equations in three unknowns (i.e,
$\overline{X_{i}}(0),\overline{X_{i}}(-1)$ and
$\overline{X_{i}}(1))$ from which obtain the expression of
$\overline{X_{i}}(0)$ given by Proposition 1 for the case $1<i<N$
with $s>1$ and $f>1$.
\end{proof}



The next proposition provides expressions for the transition
probabilities  of the embedded Markov chain shown in
Fig.~\ref{fig:ARF-Macro},  for $1<i<N$. 
To prove this proposition, we compute the probability of getting
from state $\cal{S}_0^i$ to state $\cal{S}_s^i$ , which
corresponds exactly to $p_{i,i+1}$.

\begin{prop}
\label{prop2}
 Let $p_{i,i+1}$ be the transition probability of
switching from state $i$ to state $i+1$. Then,

\begin{eqnarray}\label{eq:prop2} p_{i,i+1}=\left\{\begin{array}{ll}
\frac{(\alpha_{i})^s{\sum_{j=0}^{f-1}{{(1-\alpha_{i})^j}}}}{1-[\sum_{j=1}^{s-1}{{(\alpha_{i})}^j}\sum_{j=1}^{f-1}{{(1-\alpha_{i})^j}}]}
& \qquad \mbox{for $1<i<N$ with $s>1$ and $f>1$;}\\[0.15in]
(\alpha_{i})^s{\sum_{j=0}^{f-1}{{(1-\alpha_{i})^j}}} & \qquad\mbox{for
$1<i<N$ with $s=1$ or $f=1$}.
\end{array}
\right.
\end{eqnarray}
In addition, we have $p_{1,2}=p_{N,N-1}=1$, and
$p_{i,i-1}=1-p_{i,i+1}$ for $1<i<N$.
\end{prop}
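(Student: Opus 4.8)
The plan is to compute $p_{i,i+1}$ as an absorption probability in the micro-chain of Fig.~\ref{fig:ARF-micro}, following the same two-step strategy used for Proposition~\ref{prop1}. For $1<i<N$, let $P_i(j)$ denote the probability that, starting from state $\mathcal{S}_j^i$, the process reaches the ``up'' termination state $\mathcal{S}_s^i$ before the ``down'' termination state $\mathcal{S}_{-f}^i$. Since a transition out of macro-state $i$ to state $i+1$ occurs exactly when $\mathcal{S}_s^i$ is hit, we have $p_{i,i+1}=P_i(0)$, with boundary conditions $P_i(s)=1$ and $P_i(-f)=0$. Conditioning on the outcome of the next transmission gives the one-step recursions $P_i(j)=\alpha_i P_i(j+1)+(1-\alpha_i)P_i(-1)$ for $1\le j<s$ and $P_i(-j)=(1-\alpha_i)P_i(-(j+1))+\alpha_i P_i(1)$ for $1\le j<f$, reflecting that a failure in the success arm resets the count to $\mathcal{S}_{-1}^i$ and a success in the failure arm resets it to $\mathcal{S}_1^i$.

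First I would establish, by induction exactly as in the proof of Proposition~\ref{prop1}, the two closed forms
\begin{equation*}
P_i(s-u)=(\alpha_i)^u+(1-\alpha_i)P_i(-1)\sum_{k=0}^{u-1}(\alpha_i)^k,\qquad 0<u<s,
\end{equation*}
and
\begin{equation*}
P_i(-f+v)=\alpha_i P_i(1)\sum_{k=0}^{v-1}(1-\alpha_i)^k,\qquad 0<v<f.
\end{equation*}
These are the probability-analogues of Eqs.~(\ref{eq:proof1}) and~(\ref{eq:proof2}); the only difference is that the additive ``counting'' terms present in the expected-transmission recursions are absent here, since we now accumulate absorption probabilities rather than transmission counts. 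The base cases $u=1$ and $v=1$ follow directly from the boundary conditions $P_i(s)=1$ and $P_i(-f)=0$, and the inductive step is the same substitution carried out in Eqs.~(\ref{eq:induction4})--(\ref{eq:induction3}).

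Next I would set $u=s-1$ and $v=f-1$ to obtain expressions for $P_i(1)$ and $P_i(-1)$, and combine them with the first-step relation $P_i(0)=\alpha_i P_i(1)+(1-\alpha_i)P_i(-1)$. This yields three linear equations in the three unknowns $P_i(0)$, $P_i(1)$, $P_i(-1)$. Solving the $2\times 2$ system for $P_i(1)$ and $P_i(-1)$ and substituting back gives $P_i(0)=\alpha_i P_i(1)\sum_{k=0}^{f-1}(1-\alpha_i)^k$ after recognizing $1+(1-\alpha_i)\sum_{k=0}^{f-2}(1-\alpha_i)^k=\sum_{k=0}^{f-1}(1-\alpha_i)^k$; plugging in the solved value of $P_i(1)$ produces the stated expression. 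The one genuinely fiddly step, which I expect to be the main obstacle, is verifying that the denominator emerging from the linear solve, namely $1-(1-(\alpha_i)^{s-1})(1-(1-\alpha_i)^{f-1})$, coincides with $1-\sum_{j=1}^{s-1}(\alpha_i)^j\sum_{j=1}^{f-1}(1-\alpha_i)^j$; this follows by evaluating the two geometric sums $\sum_{j=1}^{s-1}(\alpha_i)^j=\alpha_i(1-(\alpha_i)^{s-1})/(1-\alpha_i)$ and $\sum_{j=1}^{f-1}(1-\alpha_i)^j=(1-\alpha_i)(1-(1-\alpha_i)^{f-1})/\alpha_i$, whose product telescopes to $(1-(\alpha_i)^{s-1})(1-(1-\alpha_i)^{f-1})$.

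Finally, I would dispatch the remaining cases. For $s=1$ or $f=1$ the empty inner sums make the denominator equal to $1$, and the formula reduces to $p_{i,i+1}=(\alpha_i)^s\sum_{j=0}^{f-1}(1-\alpha_i)^j$, which can also be checked directly (e.g., for $s=1$ the up-transition is lost only if $f$ consecutive failures occur first, giving $1-(1-\alpha_i)^f$). The boundary identities are immediate from the chain structure: state $1$ has no down-termination and state $N$ has no up-termination, so each has a single absorbing outcome, giving $p_{1,2}=p_{N,N-1}=1$; and for $1<i<N$ the process is absorbed with probability one into exactly one of the two mutually exclusive termination states, so $p_{i,i-1}=1-p_{i,i+1}$.
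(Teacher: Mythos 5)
Your proposal is correct and follows essentially the same route as the paper: the paper likewise defines the hitting probability $q_i(j)$ of reaching $\cal{S}_s^i$ from $\cal{S}_j^i$ (your $P_i(j)$), proves the same two closed forms by induction, combines them with the first-step relation $q_i(0)=\alpha_i q_i(1)+(1-\alpha_i)q_i(-1)$ at $u=s-1$, $v=f-1$, and solves the resulting three linear equations. Your additional verification that the geometric-sum product telescopes to $(1-(\alpha_i)^{s-1})(1-(1-\alpha_i)^{f-1})$ is a correct and welcome detail the paper leaves implicit.
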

\begin{proof}

Define $q_{i}(j)$ to be the probability of reaching state
$\cal{S}_s^i$ from state $\cal{S}_j^i$. Therefore,
$p_{i,i+1}=q_{i}(0)$.

We outline the proof of the proposition for the case $1<i<N$ with
$s>1$ and $f>1$. Similar to Proposition~\ref{prop1}, the proof
follows a two step approach. The first step is to prove that the
following two equations hold, which can be done via induction as
in the proof of Proposition~\ref{prop1}:
\begin{equation}
q_{i}(s-u)=
\alpha_{i}^u+q_{i}(-1)(1-\alpha_{i})\sum_{k=0}^{u-1}{{\alpha_{i}}^k}\mbox{
  for $0<u<s$};
\label{eq:prop2proof1}
\end{equation}
\begin{equation}
q_{i}(-(f-v))=
q_{i}(1)\alpha_{i}\sum_{k=0}^{v-1}{(1-\alpha_{i})^k}\mbox{ for
$0<v<f$}.
\label{eq:prop2proof2}
\end{equation}

Next, we note that
\begin{equation}\label{eq:prop2proof5}
q_{i}(0)=\alpha_{i}q_{i}(1)+(1-\alpha_{i})q_{i}(-1),
\end{equation}
and substituting $u=s-1$ in Eq.~(\ref{eq:prop2proof1}) and $v=f-1$
in Eq.~(\ref{eq:prop2proof2}) we have,
\begin{equation}\label{eq:prop2proof3}
q_{i}(1)={\alpha_{i}}^{s-1}+q_{i}(-1)(1-\alpha_{i})\sum_{k=0}^{s-2}{(\alpha_{i})^k};
\end{equation}
\begin{equation}\label{eq:prop2proof4}
q_{i}(-1)=q_{i}(1)\alpha_{i}\sum_{k=0}^{f-2}{(1-\alpha_{i})^k}.
\end{equation}




Solving Eqs.~(\ref{eq:prop2proof5}),~(\ref{eq:prop2proof3})
and~(\ref{eq:prop2proof4}) for $q_{i}(0)$, we obtain the
expression of $p_{i,i+1}$ given by Proposition~\ref{prop2} for the
case $1<i<N$ with $s>1$ and $f>1$.

\end{proof}

Using
Eqs.~(\ref{eq:frac}),~(\ref{eq:pi}),~(\ref{eq:pi_1}),~(\ref{eq:mui})
and Propositions~\ref{prop1} and~\ref{prop2}, we thus have derived
closed-form expressions for $p_{i}$, where $1 \leq i \leq N$, as a
function of the parameters $\alpha_i, R_i,\overline{\ell}, s, f$
and $N$. Since $f_i=p_i$, an expression for the throughput of ARF
follows immediately from Eq.~(\ref{eq:throughput}).

\subsection{AARF}
\label{sec:AARF}

The behavior of AARF is conceptually similar to that of ARF and
its analysis can also be carried out using a semi-Markov process
formulation. The complexity of the analysis lies in modeling the
back-off procedure of AARF, which requires properly defining the
states of the semi-Markov process.

To model the operation of AARF at each bit rate $R_{i}$, where $1
\leq i \leq N$, we define the ``fall back'' states $i_{\beta}$ and
the ``probe states'' $i_{\beta}^{+1}$, as illustrated in
Fig.~\ref{fig:AARF-macro}. The variable $\beta$, where $0 \leq
\beta \leq \beta_{max}$, is indicative of the current back-off
stage. Thus, if the process is in state $i_{\beta}$, there must be
${2^\beta} s$ consecutive successful packet transmissions before
the process moves to probe state $i_{\beta}^{+1}$, where a probe
packet is transmitted at rate $R_{i+1}$. If the probe packet is
successfully transmitted then the process transitions to state
${(i+1)}_{0}$. Otherwise, the process moves to the next fall back
state, i.e., state $i_{\beta+1}$. Similar to ARF, if the process
is in some state $i_{\beta}$ and experiences $f$ consecutive
packet transmission failures then it transitions to state
${(i-1)}_{0}$ (except for the case $i=1$, where the process
remains in the same state). The state $i_{\beta_{max}}$ represents
the maximum fall back state. The process keeps returning to that
state until the transmission of a probe packet at rate $R_{i+1}$
is successful or $f$ consequent packet failures occur. Finally, we
note that there are no fall back states at rate $R_N$, and thus
there is only one state $N_0$ which is defined the same way as
state $N$ in ARF.

\begin{figure}[t]
\centering
\resizebox{3in}{!}{\includegraphics{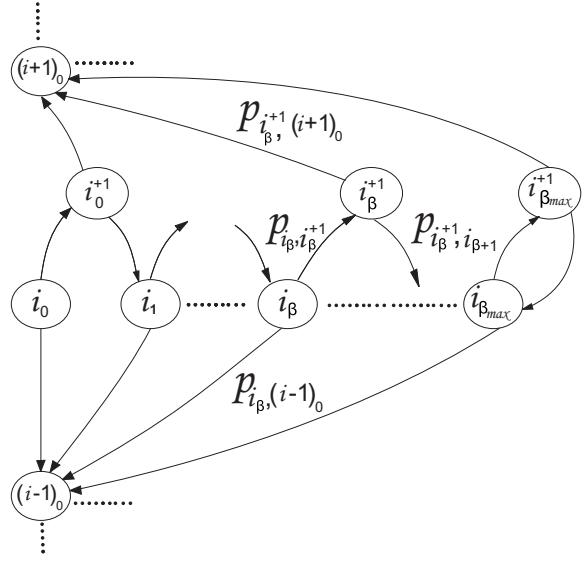}}
\caption{Embedded Markov chain modeling AARF behavior at the
moments of transitions to a new state. The variable $\beta$ is an
indicator of the back-off stage. States $i_\beta$ correspond to
``fall back'' states, in which transmissions take place at rate
$R_i$, and states $i^{+1}_\beta$ correspond to ``probe states'',
in which transmissions take place at rate $R_{i+1}$.}
\label{fig:AARF-macro}
\end{figure}\renewcommand{\baselinestretch}{\linegap}

Similar to ARF, whenever the process enters one of the above
defined states, the time spent in each state and the transition
probabilities to the next possible states are independent of the
history. Thus, the behavior of AARF can be modeled using a
semi-Markov process. The embedded Markov chain for this process is
shown in Fig.~\ref{fig:AARF-macro}.

As per Eq.~(\ref{eq:throughput}), in order to find an expression
for the throughput of AARF, we need to calculate $f_{i}$, i.e.,
the long run proportion of time data transmission is carried out
at the bit rate $R_{i}$. The quantities $f_{i}$ can be expressed
as a function of $p_{i_{\beta}}$ and $p_{(i-1)^{+1}_{\beta}}$
which are defined as the steady-state probabilities of finding the
semi-Markov process in either the fall-back state $i_{\beta}$ or
the probe packet state $(i-1)^{+1}_{\beta}$ respectively.
Specifically, we have
\begin{equation}\label{eq:f_AARF}
f_{i}=\sum_{\beta=0}^{\beta_{max}}(p_{i_{\beta}}+p_{(i-1)^{+1}_{\beta}})
\qquad \mbox{ for } 1 \leq i \leq N,
\end{equation}
where by definition $p_{{0}^{+1}_{\beta}}=0$, and
$p_{N_{\beta}}=0$ for $\beta \geq 1$.

As in the previous section, we can find expressions for
$p_{i_{\beta}}$ and $p_{i^{+1}_{\beta}}$ by computing i) the
average time spent in each state of the semi-Markov process; ii)
the transition probabilities of the embedded Markov chain; and
iii) the steady-state probabilities of the embedded Markov chain.

We start with items i) and ii). Consider first the probe states. The
average time spent in state~$i^{+1}_{\beta}$  is simply
$\mu_{i^{+1}_{\beta}}=\overline{\ell}/R_{i+1}$.
The transition probabilities out of the probe states, for $\beta < \beta_{max}$, are given by
$p_{(i^{+1}_{\beta},(i+1)_0)}  =
\alpha_{i+1}$ and
$p_{(i^{+1}_{\beta},i_{\beta+1})}  =  1-\alpha_{i+1}$.
For the case $\beta=\beta_{max}$, we have
$p_{(i^{+1}_{\beta_{max}},(i+1)_0)} = \alpha_{i+1}$ and
$p_{(i^{+1}_{\beta_{max}},i_{\beta_{max}})} =  1-\alpha_{i+1}$.

The behavior of AARF in the fall back states ${i_{\beta}}$ is very
similar to that of ARF in state $i$, except that the number of
consecutive successful transmissions required before transmitting
at the next higher bit rate is $b_{\beta}={2^\beta} s$ instead of
just $s$. Thus, we can obtain an expression for $\mu_{i_{\beta}}$ by simply replacing $s$ by $b_{\beta}$ in Proposition~\ref{prop1}.

The last item to complete the analysis is to compute the
steady-state probabilities of the embedded Markov chain
$\pi_{i_{\beta}}$ and $\pi_{i^{+1}_{\beta}}$.  Once this is done,
the proportion of time spent by AARF in each state is given by the
following expressions that are analogous to Eq.~(\ref{eq:frac}):
\begin{equation}\label{eq:pfallback_AARF}
p_{i_{\beta}}=\frac{\pi_{i_{\beta}}\mu_{i_{\beta}}}{\sum_{i=1}^{N}{\sum_{\beta=1}^{\beta_{max}}{(\pi_{i_{\beta}}+\pi_{(i-1)^{+1}_{\beta}})}}}; \qquad
p_{i^{+1}_{\beta}}=\frac{\pi_{i^{+1}_{\beta}}\mu_{i^{+1}_{\beta}}}{\sum_{i=1}^{N}{\sum_{\beta=1}^{\beta_{max}}{(\pi_{i_{\beta}}+\pi_{(i-1)^{+1}_{\beta}})}}},
\end{equation}
where by definition $\pi_{{0}^{+1}_{\beta}}=0$, and
$\pi_{N_{\beta}}=0$ for $\beta \geq 1$.

We next show that the seemingly complex structure of the embedded
Markov chain shown in Fig.~\ref{fig:AARF-macro} has the remarkable
property of collapsing into a simple birth-death process.

First, we observe that the steady probabilities of the states at
level $i$, namely $\pi_{i_{\beta}}$ and $\pi_{i^{+1}_{\beta}}$,
can all be expressed as a function of $\pi_{i_{0}}$. This is done
by taking contours around each state of level $i$ in order, that
is, $i^{+1}_{0}, i_{1}, i^{+1}_{1}, \ldots$, and writing the
balance equations for each. The expressions are as follows:

\begin{eqnarray}\label{eq:piprobepacket_AARF}
\pi_{i^{+1}_{\beta}}=\left\{\begin{array}{ll}
\pi_{i_{0}}p_{(i_{\beta},i^{+1}_{\beta})}\prod_{k=0}^{\beta-1}{[p_{(i_{k},i_{k}^{+1})}p_{(i^{+1}_{k},{i}_{k+1})}]},
& \qquad \mbox{for $0\leq\beta<\beta_{max}$ and $1\leq i<N$};\\[0.15in]
\frac{\pi_{i_{0}}p_{(i_{\beta},i^{+1}_{\beta})}\prod_{k=0}^{\beta-1}{[p_{(i_{k},i_{k}^{+1})}p_{(i^{+1}_{k},{i}_{k+1})}]}}{1-p_{(i^{+1}_{\beta},{i}_{\beta})}p_{(i_{\beta},{i}^{+1}_{\beta})}},
& \qquad \mbox{for $\beta=\beta_{max}$ and $1\leq i<N$};
\end{array}
\right.
\end{eqnarray}
\begin{eqnarray}\label{eq:pifallback_AARF}
\pi_{i_{\beta}}=\left\{\begin{array}{ll}
\pi_{i_{0}}\prod_{k=0}^{\beta-1}{[p_{(i_{k},i_{k}^{+1})}p_{(i^{+1}_{k},{i}_{k+1})}]},
& \qquad \mbox{for $0<\beta<\beta_{max}$ and $1\leq i<N$};\\[0.15in]
\frac{\pi_{i_{0}}\prod_{k=0}^{\beta-1}{[p_{(i_{k},i_{k}^{+1})}p_{(i^{+1}_{k},{i}_{k+1})}]}}{1-p_{(i^{+1}_{\beta},{i}_{\beta})}p_{(i_{\beta},{i}^{+1}_{\beta})}},
& \qquad \mbox{for $\beta=\beta_{max}$ and $1\leq i<N$}.
\end{array}
\right.
\end{eqnarray}
Now, at equilibrium, the rate of transitions from level $i$ to
level $i+1$ must be the same as that from level $i+1$ to level
$i$. Thus,
\begin{equation}
\sum_{\beta=0}^{\beta_{max}}
(p_{i^{+1}_{\beta},{{(i+1)}_{0}}})\pi_{i^{+1}_{\beta}}=\sum_{\beta=0}^{\beta_{max}}(p_{{(i+1)}_{\beta},{i_{0}}})\pi_{{(i+1)}_{\beta}},
\mbox{ for $1\leq i<N$}.\label{eq:birthdeath_AARF}
\end{equation}
Using Eq.~(\ref{eq:piprobepacket_AARF}), all the individual terms in
the lhs of Eq.~(\ref{eq:birthdeath_AARF}) can be expressed as a
function of $\pi_{i_{0}}$, while, using
Eq.~(\ref{eq:pifallback_AARF}), all the individual terms in the rhs
of Eq.~(\ref{eq:birthdeath_AARF}) can be expressed as a function of
$\pi_{(i+1)_{0}}$, leading to balance equations similar to a
birth-death process. Using Eq.~(\ref{eq:birthdeath_AARF}), we can
then express all the steady-state probabilities as a function of
$\pi_{1_{0}}$.
Finally, we can resort to the normalization condition  to evaluate
$\pi_{1_{0}}$, i.e.,
\begin{equation}\label{eq:nc_AARF}
\sum_{i=1}^{N}{\sum_{\beta=0}^{\beta_{max}}{(\pi_{i_{\beta}}+\pi_{(i-1)^{+1}_{\beta}})}}=1,
\end{equation}
and our analysis is complete.

\subsection{Accounting for MAC Overhead}
\label{sec:ARF_IEEE}
We now outline a generalization of the previous analysis to account for the MAC overhead present in real protocols. We focus on the popular IEEE 802.11b DCF standard~\cite{ieee:80211}, with which we assume the reader is familiar.
For brevity, we only discuss the ARF algorithm, since the generalized analysis of AARF follows similar lines. Due to space constraints, detailed technical derivations are deferred to~\cite{angad-thesis}.

The first source of overhead in IEEE 802.11 is the transmission time of ACK packets, denoted by $T_{ACK}$, and the required  inter-frame spacings DIFS and SIFS. For simplicity, we assume that the RTS/CTS handshake is disabled and ACK packets are not lost. Hence, for each successful DATA packet transmission, the time overhead amounts to $T_{success}= DIFS + SIFS +T_{ACK}$, while for each failed DATA packet transmission, the overhead corresponds to $T_{failure}=DIFS$.

The second major source of overhead results from the binary exponential back-off procedure. The average back-off time, after $0 \leq \gamma \leq \gamma_{max}$ consecutive failures, is given by:
\begin{equation}\label{eq:T(n)}
\overline{T}_{b}(\gamma)=\frac{2^{\gamma}CW_{min}-1}{2},
\end{equation}
where $\gamma$ is the back-off counter. The value of this counter is incremented after each transmission failure and reset after a successful transmission.
Note that  $\gamma_{max}$ represents the maximum number of retransmission attempts before the packet is dropped. We assume that $CW_{max} \geq 2^{\gamma_{max}}CW_{min}-1$.

We now model the evolution of the ARF process, inclusive of MAC overhead. We define $i_{\gamma}$ to be a state of the process in which packet transmissions are carried out at the
bit rate $R_i$. The index $\gamma$ corresponds to the value of the back-off counter \emph{at the instant} where the process enters the state. Transitions out of state $i_{\gamma}$ take place when the bit rate changes, following the rules of ARF. Hence, $s$ successive successful packet transmissions
lead to a transition into state ${(i+1)}_0$, for $i < N$. On the other hand, $f$ consecutive packet transmission failures lead to a transition into one of two possible states. If, on entering state $i_{\gamma}$, $f$ consecutive packet transmissions fail without any interim successful packet transmissions, then state $i_{\gamma}$ is exited with
a back-off counter equalling $(\gamma+f) \pmod {\gamma_{max}+1}$. This back-off
counter is carried over to the next state implying a transition to state ${(i-1)}_{(\gamma+f)\pmod {\gamma_{max}+1}}$, for $i > 1$. Otherwise,
if a successful packet transmission occurs prior to $f$ consecutive failures, the value of the back-off counter at the exit time is ${f\pmod{ \gamma_{max}+1}}$, implying a transition to state ${(i-1)}_{f\pmod {\gamma_{max}+1}}$, for $i > 1$.

When entering state $i_{\gamma}$, the time spent by the process in that state and the transition probabilities to other states are independent of history. Hence, the process is semi-Markovian. Figure~\ref{fig:ARF_IEEE-Macro-Example} shows a state diagram of this process for the case  $f=2$,  $N=3$, and $\gamma_{max}=5$.

\begin{figure}[t]
\centering
\resizebox{3.5in}{!}{\includegraphics{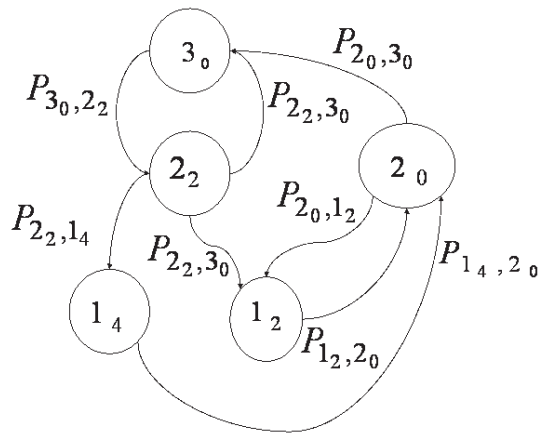}}
\caption{Embedded Markov chain modeling the behavior of ARF in IEEE
802.11 at the moments of transitions to a new state formulated for
specific values of $N=3$, $f=2$ and $\gamma_{max}=5$.}
\label{fig:ARF_IEEE-Macro-Example}
\end{figure}
\renewcommand{\baselinestretch}{\linegap}

Let $\mu_{i_{\gamma}}$ represent the expected time (inclusive of overhead) spent in state $i_{\gamma}$ and let $\pi_{i_{\gamma}}$ represent the steady-state probabilities of the embedded Markov chain associated with the semi-Markov process.  In~\cite{angad-thesis}, we detail how to compute these quantities. The analysis is similar to that carried out in the previous sections. In particular, for the case $\gamma=0$, the transition probabilities out of state $i_0$ are the same as those out of state $i$, as given by Proposition~\ref{prop2}, and the expected time spent in state $i_0$ is
\begin{eqnarray}\label{eq:mui_IEEE}
\mu_{i_0}&=&\sum_{j=1}^{f-1}\overline{Y_{i}}(-j)(\frac{\overline{\ell}}{R_{i}}+
\overline{T}_{b}(j \pmod{\gamma_{max}+1})+\alpha_{i}T_{success}+(1-\alpha_{i})T_{failure})\nonumber \\
&+&\sum_{j=0}^{s-1}\overline{Y_{i}}(j)(\frac{\overline{\ell}}{R_{i}}+
\overline{T}_{b}(0)+\alpha_{i}T_{success}+(1-\alpha_{i})T_{failure}),
\end{eqnarray}
where $\overline{Y_{i}}(j)$ represents the expected number of visits to state $\cal{S}_j^i$, which is defined the same way as in Fig.~\ref{fig:ARF-micro} (i.e., $\cal{S}_j^i$ corresponds to the state of $j$ consecutive successes or failures, depending on the sign of $j$, at bit rate $R_i$). Eq.~(\ref{eq:mui_IEEE}) accounts for both the average transmission time and overhead associated with each visit.

The average throughput of ARF, inclusive of MAC overhead, can now readily be computed. One should note, however, that in each state $i_{\gamma}$ the average time spent in actual transmissions, denoted  $\theta_{i_{\gamma}}$, is shorter than $\mu_{i_{\gamma}}$, the total average time spent in that state. The value of $\theta_{i_{\gamma}}$ is independent of $\gamma$ and identical to the average time spent in state $i$ by the semi-Markov process of ARF, exclusive of MAC overhead, that was analyzed in Section~\ref{sec:ARF} (see  Proposition~\ref{prop1} and Eq.~(\ref{eq:mui})).
Hence, the long-run proportion of time spent by ARF transmitting at rate $R_i$ is given by
\begin{equation}\label{eq:frac_back_off}
g_{i}=\frac{\sum_{\gamma=0}^{\gamma_{max}}{\pi_{i_{\gamma}}\theta_{i_{\gamma}}}}
{\sum_{i=1}^{N}\sum_{\gamma=0}^{\gamma_{max}}{\pi_{i_{\gamma}}\mu_{i_{\gamma}}}},
\end{equation}
and the final expression for the average throughput is
$
\tau=\sum_{i=1}^{N} g_{i} \alpha_{i}R_{i}$.

\section {Numerical Results}
\label{sec:Numercial Analysis}

In this section, we illustrate the utility of our analysis by
numerically comparing the throughput performance of ARF and AARF
in two different channel regimes. These results show that neither
of these algorithms consistently outperforms the other. We then
propose a new variant, called Persistent AARF (PARFF), that is
shown to achieve a good compromise between the two algorithms.
Finally, we evaluate the impact of MAC overhead on the respective performance of the algorithms.

\subsection{Performance Comparison of ARF and AARF}

We consider a wireless LAN supporting $N=2$ bit rates, with $R_1=
1$ Mbps and $R_2=2$ Mbps. The parameters of the algorithms are set
as follows: $s=10$, $f=2$, and, for AARF, $\beta_{max}=3$. We
compare the throughput performance of ARF and AARF under two
channel regimes of practical interest.  In the first regime, the
probability of a successful packet transmission at bit rate $R_1$
is much higher than at bit rate $R_2$, i.e., we fix $\alpha_2=0.2$
and evaluate the throughput of ARF and AARF for values of
$\alpha_1$ ranging from 0.7 to 1. In the second regime, the
probability of a successful packet transmission at bit rate $R_1$
is only slightly higher than at bit rate $R_2$, that is, we fix
$\alpha_2=0.7$ and vary $\alpha_1$ from 0.7 to 1 (note that
$\alpha_1$ should always exceed $\alpha_2$).

Figure~\ref{fig:0.2} depicts results for the first regime. We
observe that AARF outperforms ARF and that the difference between
the performance increases with $\alpha_1$. The cause of the
discrepancy is that ARF attempts too often to switch to the
failure prone $R_2$ bit rate, which results in throughput
degradation. On the other hand, AARF spends a lot more time in the
optimal $R_1$ bit rate. This result is consistent with
experimental and simulation results reported
in~\cite{Bicket,AARF}.

Figure~\ref{fig:0.7} shows results for the second regime and
illustrates conditions under which ARF outperforms AARF.
The throughput performance of AARF suffers in this region as it
tends to spend too much time in the under performing $R_1$
bit rate, whereas ARF tries to switch to the optimal $R_2$ much
more frequently. This is an insightful result as it indicates the
need to optimize AARF under channel regimes where the probability
of successful packet transmission is high at both the lower and
higher bit rates.

\begin{figure}[t]
\centering \resizebox{3.5in}{!}{\includegraphics{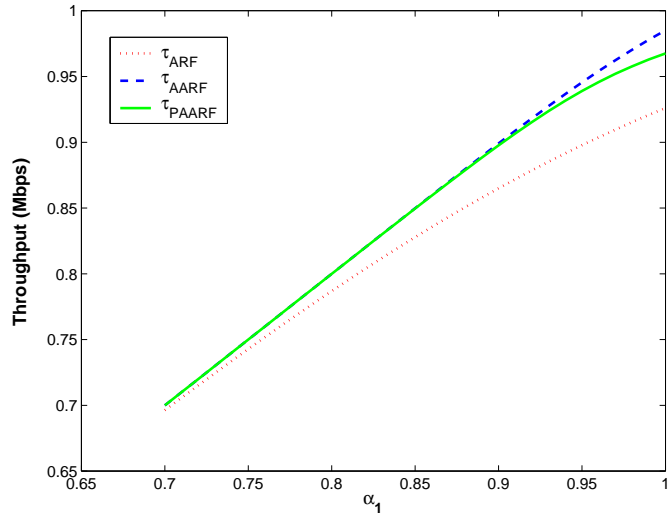}}
\caption{$R_{1}=1$ Mbps, $R_{2}=2$ Mbps, $\alpha_{2}=0.2$}
\label{fig:0.2}
\end{figure}\renewcommand{\baselinestretch}{\linegap}

\begin{figure}[t]
\centering \resizebox{3.5in}{!}{\includegraphics{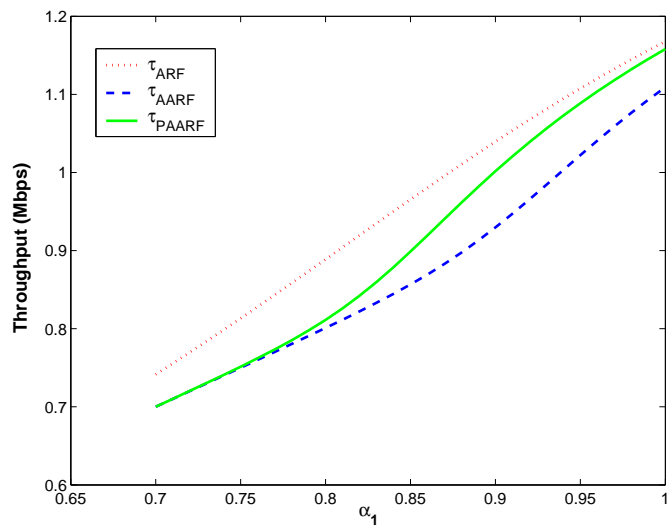}}
\caption{$R_{1}=1$ Mbps, $R_{2}=2$ Mbps, $\alpha_{2}=0.7$}
\label{fig:0.7}
\end{figure}\renewcommand{\baselinestretch}{\linegap}

\subsection{Persistent AARF}

The main cause of the relatively poor performance of AARF in the
second regime is that it is not persistent enough in probing the
higher-bit rates. Thus, we propose a simple variation of AARF,
called Persistent AARF. PAARF is identical to AARF, except that,
when entering a probe state $i^{+1}_{\beta}$, it transmits two
probe packets at the next higher bit rate instead of just one. If
anyone of these two probe packets is successfully transmitted,
then PAARF switches to the next higher bit rate, i.e., to state
$(i+1)_{0}$.

One of the main advantages of our analytical approach is to allow
evaluating the performance of such new variants without having to
run lengthy simulations. In particular, the analysis of PAARF turns out to be
almost the same as that of AARF. The only difference is the mean
time spent in probe states which is now
\begin{equation}
\mu_{i^{+1}_{\beta}}=\frac{(2-\alpha_{i+1})\overline{\ell}}{R_{i+1}},
\end{equation}
and the transition probabilities out of  the probe states which
become
\begin{equation}
p_{(i^{+1}_{\beta},(i+1)_0)}  =
2\alpha_{i+1}-(\alpha_{i+1})^2; \qquad
p_{(i^{+1}_{\beta},i_{\beta+1})}  =  1-
2\alpha_{i+1}+(\alpha_{i+1})^2,
\end{equation}
and for the case $\beta=\beta_{max}$,
\begin{equation}
p_{(i^{+1}_{\beta_{max}},(i+1)_0)} =
2\alpha_{i+1}-(\alpha_{i+1})^2; \qquad
p_{(i^{+1}_{\beta_{max}},i_{\beta_{max}})} =  1-
2\alpha_{i+1}+(\alpha_{i+1})^2.
\end{equation}

Figures~\ref{fig:0.2} and~\ref{fig:0.7} show the performance of
PAARF for the two channel regimes in consideration. As one can
see, PAARF generally performs close to the best algorithm in each
case.  One exception is when $\alpha_1$ is very close to
$\alpha_2$, in the second regime. In that case, PAARF performs
only marginally better than AARF. However, we conjecture that the
likelihood of this scenario is relatively low because if the
packet success probability at rate $R_2$ is quite high (e.g.,
0.7), then the packet success probability at rate $R_1$ is likely
to be close to 1.

\subsection{Impact of MAC Overhead and Transmission Rates}
\label{sec:impact}
We next evaluate the effects of MAC overhead on the performance of the various algorithms. We present numerical results based on the analysis of Section~\ref{sec:ARF_IEEE}. The protocol
parameters are set according to the specifications of the IEEE 802.11 standard, that is, $\gamma_{max}=5$, $DIFS=50$~$\mu$s, $SIFS=10$~$\mu$s, $T_{ACK}=112$~$\mu$s, and~$CW_{min}=32$~\cite{ieee:80211}.
The other algorithm, operational and channel parameters remain the same as in the previous sections.

Figures~\ref{fig:ARFAARFPAARF212} and~\ref{fig:ARFAARFPAARF712}  show the performance of ARF, AARF
and PAARF, respectively for $\alpha_{2}=0.2$ and $\alpha_{2}=0.7$. The bit rates are set to $R_{1}=1$ Mbps and $R_{2}=2$ Mbps. The results show that MAC overhead reduces the achieved throughput of all the three
algorithms. However their underlying behavior remains the same as previously.

We next illustrate the effect of the bit rates on the
behavior of the three algorithms. For this, we again analyze
their delivered throughput under the previously mentioned channel
regimes, but this time with bit rates $R_{1}=5.5$ Mbps and $R_{2}=11$ Mbps.

Figure~\ref{fig:ARFAARFPAARF2511} show the performance of ARF, AARF and PAARF for $\alpha_{2}=0.2$. We can see that the relative performance of ARF with respect to the other algorithms is considerably worse at higher rates. This can be attributed to the fact that MAC overhead time does not scale
with the transmission rate. Hence, at high data rate, the impact of MAC overhead due to a failed transmission gets magnified. Consequently, ARF with its higher tendency to switch to the
poor performing rate $R_{2}$ experiences a relatively larger performance
degradation than the other algorithms.

Figure~\ref{fig:ARFAARFPAARF7511} shows the performance of the three
algorithms in the second regime, i.e., $\alpha_{2}=0.7$. The nature of the
results in this region is similar to those shown in
Fig.~\ref{fig:0.7}, though the superiority of ARF is not as pronounced. In fact, as $\alpha_{1}$ tends to 1, PAARF starts outperforming ARF.
Overall, these results show that PAARF keeps providing a good trade-off between ARF and~AARF, even when taking MAC overhead into consideration.

\begin{figure}[t]
\centering
\resizebox{4in}{!}{\includegraphics{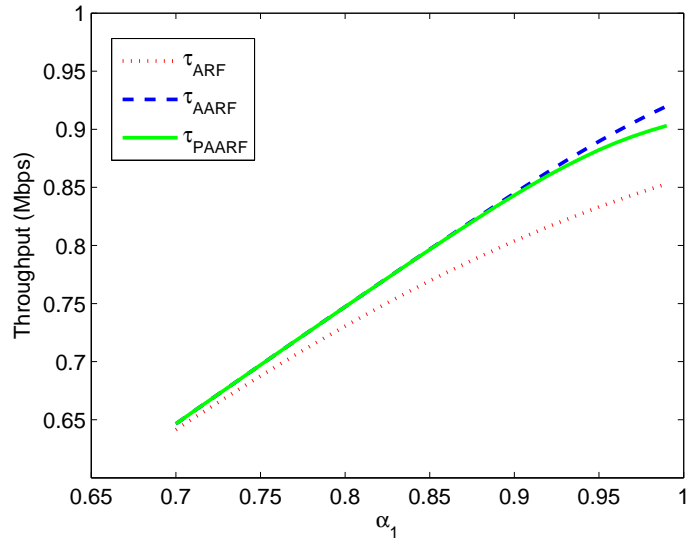}}
\caption{$R_{1}=1$ Mbps, $R_{2}=2$ Mbps, $\alpha_{2}=0.2$ with MAC
overhead} \label{fig:ARFAARFPAARF212}
\end{figure}
\renewcommand{\baselinestretch}{\linegap}

\begin{figure}[t]
\centering
\resizebox{4in}{!}{\includegraphics{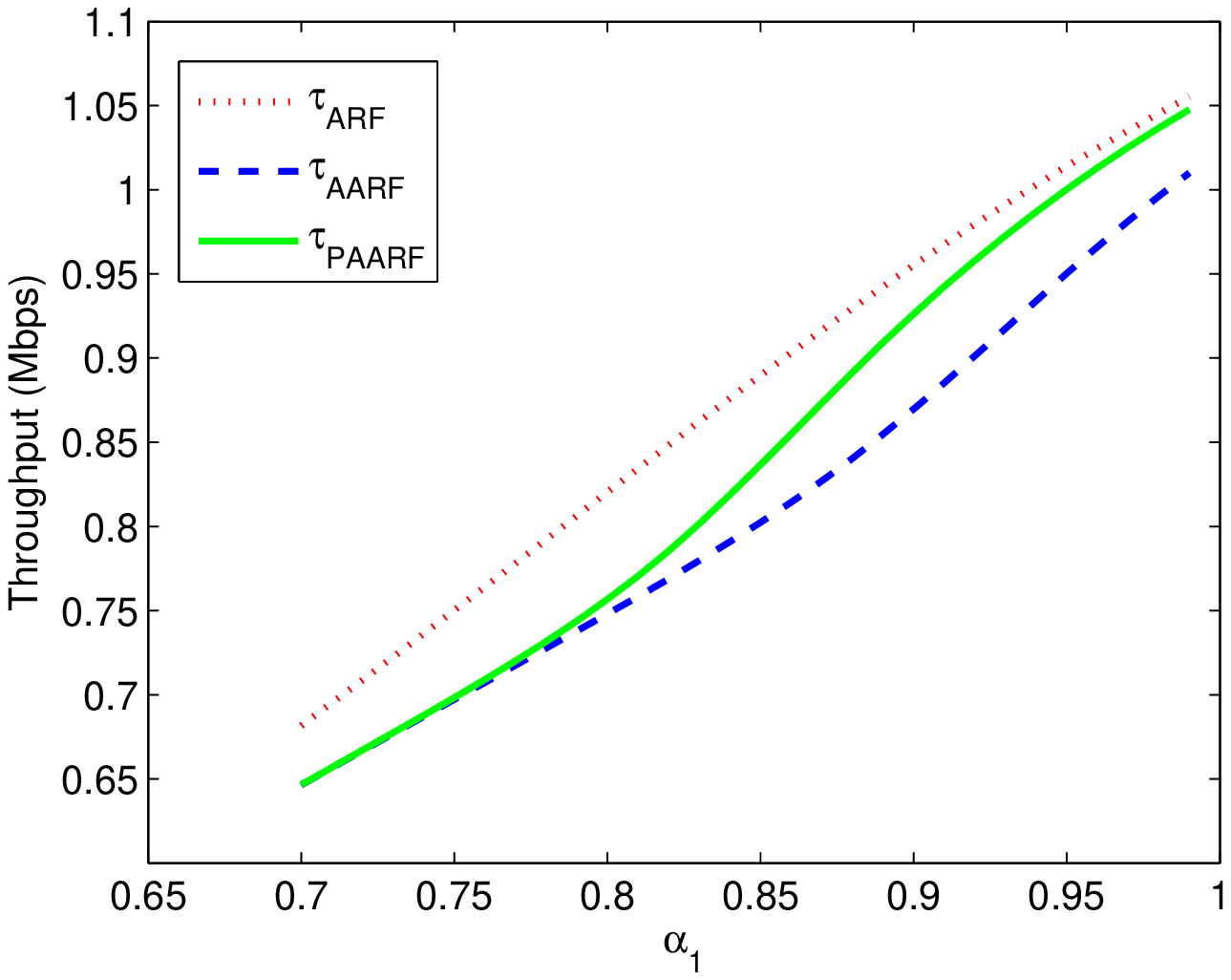}}
\caption{$R_{1}=1$ Mbps, $R_{2}=2$ Mbps, $\alpha_{2}=0.7$ with MAC
overhead} \label{fig:ARFAARFPAARF712}
\end{figure}
\renewcommand{\baselinestretch}{\linegap}

\begin{figure}[t]
\centering
\resizebox{4in}{!}{\includegraphics{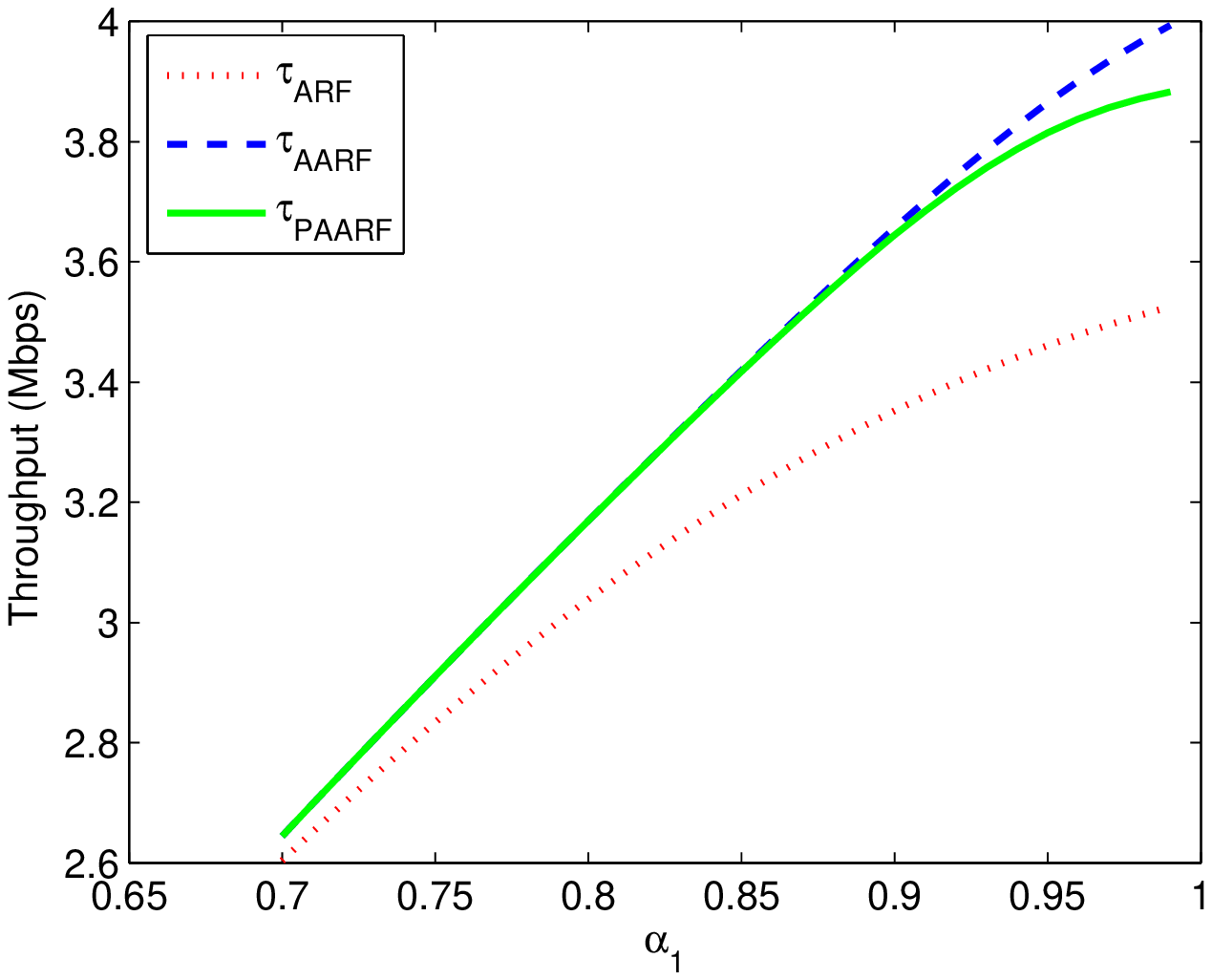}}
\caption{$R_{1}=5.5$ Mbps, $R_{2}=11$ Mbps, $\alpha_{2}=0.2$ with
MAC overhead} \label{fig:ARFAARFPAARF2511}
\end{figure}
\renewcommand{\baselinestretch}{\linegap}

\begin{figure}[t]
\centering
\resizebox{4in}{!}{\includegraphics{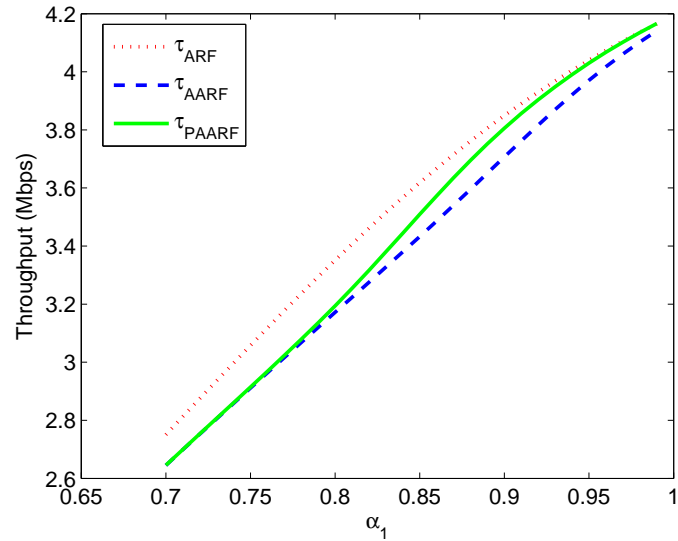}}
\caption{$R_{1}=5.5$ Mbps, $R_{2}=11$ Mbps, $\alpha_{2}=0.7$ with
MAC overhead} \label{fig:ARFAARFPAARF7511}
\end{figure}
\renewcommand{\baselinestretch}{\linegap}

\section {Conclusions}
\label{sec:Conclusions}

In this paper, we proposed a novel semi-Markovian framework
to analyze the performance of two of the most widely implemented
rate adaptation algorithms in wireless LANs, namely ARF and AARF.
Given our modeling assumptions, the analysis is exact and provides
closed form expressions for the achievable throughput of ARF and
AARF.
A particularly interesting finding was that
the multi-dimensional embedded Markov chain associated with the
semi-Markov process of AARF collapses into a simple one
dimensional birth-death process.

We used the analytical expressions to numerically compare the throughput
performance of ARF and AARF in two channel regimes for a wireless
LAN operating at two different bit rates. We found that none of the algorithms consistently prevails over the other.
Based on this insight, we devised a new variant to AARF, called
Persistent AARF (PAARF), whereby two probe packets (instead of
just one) are transmitted each time the algorithm enters one of
the probe states. We were able to analyze PAARF much the same way
as AARF and our numerical results showed that this simple
modification can significantly improve the performance of AARF in
the regime where it does not perform well, while maintaining
almost the same performance in the regime where it does perform
well.

Next we analyzed the impact of the IEEE 802.11b MAC overhead on the
performance of ARF, AARF and PAARF by numerically comparing their
throughput performance based on the analysis presented in
section~\ref{sec:ARF_IEEE}. The results revealed that, at low bit rates,
MAC overhead does not alter the basic behavior of the three
algorithms and contributes only in reducing the achieved throughput.
However, MAC overhead becomes increasingly significant as the
bit rate increases. This is because the exponentially
increasing back-off time accompanying failed packet transmissions
becomes particularly significant at high bit rates. This
phenomenon translates into a performance edge for AARF and PAARF, as
the bit rates increase.

The analytical framework developed in this paper provides the
basis for many other interesting types of optimizations. For
instance, an important issue is how to optimally set the
operational parameters of ARF and AARF.  Another important area
for future work is to numerically evaluate the performance of ARF,
AARF, and PAARF for more than two bit rates. Overall, this work marks a first step in modeling rate adaptation in wireless LANs and shows promise for analytically evaluating other open-loop rate adaptation algorithms, especially
those based on ARF.

\bibliographystyle{IEEEtran}
\bibliography{SeconPaper}

\newpage

\end{document}